\theoremstyle{definition}
\newtheorem{theorem}{Theorem}
\newtheorem{definition}[theorem]{Definition}
\newtheorem{schedulabilityTest}{Schedulability Test}
\newtheorem{example}{Example}
\newcommand{\drop}[1]{}
\newcommand{\equals}{\stackrel{\mathrm{def}}{=}}
\begin{document}

\title{Scheduling of Hard Real-Time Multi-Thread Periodic Tasks}
\author{Irina Lupu\hfill Jo\a"el Goossens\\
PARTS Research Center\\
Universit\a'e libre de Bruxelles (U.L.B.)\\
CP 212, 50 av. F.D. Roosevelt\\
1050 Brussels, Belgium\\
\{joel.goossens,irinlupu\}@ulb.ac.be}

\maketitle
\thispagestyle{empty}

\begin{abstract}
In this paper we study the scheduling of parallel and real-time recurrent tasks. Firstly, we propose a new parallel task model which allows recurrent tasks to be composed of several threads, each thread requires a single processor for execution and can be scheduled simultaneously. Secondly, we  define several kinds of real-time schedulers that can be applied to our parallel task model. We distinguish between two scheduling classes: hierarchical schedulers and global thread schedulers. We present and prove correct an exact schedulability test for each class. Lastly, we also evaluate the performance of our scheduling paradigm in comparison with Gang scheduling by means of simulations.
\end{abstract}

%------------------------------------------------------------------------- 
\Section{Introduction}\label{intro}

In this research, we consider the preemptive scheduling of hard real-time tasks on identical multiprocessor platforms (see~\cite{baker2,Baker2005An-analysis-of-}). We deal with \emph{parallel} real-time tasks, the case where each task instance (process in the following) may be executed on different processors \emph{simultaneously}. More specifically, each process is composed of several independent \emph{threads}, each thread requires one processor to be executed, consequently the process can progress upon several processors simultaneously. In this research we study the \emph{schedulability} problem  of recurring multi-thread tasks.

Nowadays, the design of parallel/multi-thread programs is common thanks to parallel programming paradigms like Message Passing Interface (MPI~\cite{Gorlatch1998A-Generic-MPI-I,Lusk1999Using-MPI-:-por}) Even better, sequential programs can be parallelized using tools like OpenMP (see~\cite{355074} for details).

\paragraph{Related Work.}

The literature concerning the scheduling of hard real-time and parallel recurring tasks is relatively poor. We can only report few models of parallel tasks and few results (schedulers and schedulability/feasibility tests). \mbox{Manimaran} et al.\@ in~\cite{mani98} consider the \emph{non-preemptive} EDF scheduling of periodic tasks. \mbox{Han} et al.\@ in~\cite{Han2006Predictability-} considered the scheduling of a (finite) set of real-time jobs allowing job parallelism while we consider the scheduling of either an infinite set of jobs (actually \emph{processes} in our terminology) or equivalently a set of periodic tasks. In a seminal work we contributed to the \emph{feasibility} problem of parallel tasks. In~\cite{CCG06b} we provided a task model which integrates job parallelism. We proved that the time-complexity of the feasibility problem of these systems is linear relatively to the number of (sporadic) tasks. In~\cite{lakshmanan2010scheduling} \mbox{Lakshmanan} et al.\@ consider the fork-join task model where each task is an alternate sequence of sequential code and $m$ parallel segments. They provided a partitioning algorithm and a competitive analysis for EDF and the fork-join task model. Regarding the \emph{schedulability} of recurrent real-time tasks, and to the best of our knowledge, we can only report results about the \emph{Gang} scheduling, where the execution requirement of processes corresponds as a $C \times v$ \emph{rectangle}, with the interpretation that a process requires \emph{exactly} $v$ processors \emph{simultaneously} for a duration of $C$ time units. Kato et al.\@ (see~\cite{KatoGang} for details) considered the EDF Gang scheduling and provided a \emph{sufficient} schedulability condition. We studied Fixed Task Priority (FTP) Gang scheduling~\cite{Goossens2010Gang-FTP-schedu} and we provided an exact schedulability test for periodic tasks.
\paragraph{This Research.}
In this paper we introduce a more realistic parallel task model, i.e., we study the scheduling of periodic and (parallel) \emph{multi-thread} tasks. Our main contribution is \emph{exact} schedulability tests for Fixed Subprogram Priority (FSP) and (FTP,~FSP) schedulers. Additionally, we show that Gang scheduling and multi-thread scheduling are incomparable and we present an empirical study which shows that, in most of the cases, multi-thread scheduling dominates Gang scheduling. Last but not least, in the further work section, we consider a realistic extension of our task model which allows tasks to be a sequence of parallel phases (each one having its own number of threads) and we present a first negative result. 

\paragraph{Paper Organization.}
This paper is organized as follows. Section~\ref{sec:model} provides our formal task model and important definitions. In Section~\ref{sec:schedulers} we present a taxonomy of schedulers dedicated  to our parallel task model. We present exact schedulability tests in Section~\ref{sec:tests} for two classes of schedulers. In Section~\ref{sec:study} we present an empirical study which shows that in most of the cases thread scheduling dominates Gang scheduling. Lastly, in Section~\ref{sec:conclusion} we conclude and consider a realistic extension of our parallel task model and present a first negative result.

%%%%%%%%%%%%%%%%%%%%%%%%%%%%%%%%%%%%%%%%%%%%%%%%%%%
\Section{Formal model and definitions}
\label{sec:model}

In this work we consider periodic multi-thread tasks, each task $\tau_{i}$ is characterized by the tuple 
\begin{equation*}
 (O_i, \{q_i^1, q_i^2, \ldots, q_i^{v_i}\}, D_i, T_i),
 \end{equation*}
where
\begin{itemize}
\item $O_i$ is the arrival instant, i.e., the moment of the first activation of the task since the system initialization;
\item $\{q_i^1, q_i^2, \ldots, q_i^{v_i}\}$ is the set of the $v_i$ subprograms of $\tau_i$; at run-time these subprograms generate threads which can be executed simultaneously, i.e., we allow task parallelism;
\item each subprogram $q_i^j$ ($1 \leq j \leq v_i$) is characterized by an individual worst-case execution time $C_i^j$;
\item $T_i$ is the period, i.e., the \emph{exact} inter-arrival time between two successive activations of the task;
\item $D_i$ is the relative deadline, i.e., the time by which the current instance of the task has to finish execution relatively to its arrival.
\end{itemize}

Throughout this paper, all timing characteristics in our model are assumed to be non-negative integers, i.e., they are multiples of some irreducible time interval (for example the CPU clock cycle is the smallest indivisible CPU time unit).

\paragraph{Task/process \& subprogram/thread.} In this work we will distinguish between off-line and run-time entities. We consider that a task is defined off-line while its instance exists only at run-time under the denomination \emph{process}. In the same vein we consider that a subprogram is defined off-line while its instance exists only at run-time under the denomination \emph{thread}. Consequently the scheduler manages processes and/or threads. Meanwhile the process or thread priority can (or cannot) be based on the static task and subprogram characteristics. 

Each task system $\tau$ is composed of $n$ periodic and parallel such tasks: $\tau = \{\tau_1, \ldots, \tau_n\}$. The deadline of each task is less than or equal to the period: $\forall i, D_i \leq T_i$ (\emph{constrained deadline model}). As each task $\tau_i$ has its own offset $O_i$, the task systems considered are \emph{asynchronous}. 

Since tasks are periodic, their subprograms have a periodic behavior as well. The $j^{\operatorname{th}}$ thread of the $k^{\operatorname{th}}$ process of $\tau_i$ is characterized by the following parameters: an arrival time $a_k^j = O_i + (k-1) \cdot T_i$, an execution demand $e_k^j = C_i^j$ and an absolute deadline $d_k^j = a_k^j + D_i$. 

The $k^{\operatorname{th}}$ process of $\tau_i$ is characterized by an arrival time $a_k = O_i + (k-1) \cdot T_i$, an execution demand expressed by the parallel execution demands of its threads $e_k = (e_k^1, \ldots, e_k^{v_i})$ and an absolute deadline $d_k = a_k + D_i$.

A task is said to be \emph{active} if it has a process with unfinished execution demand.

A task is characterized by the measure called \emph{utilization}: $u_i = \frac{\sum_{j=1}^{v_i}C_i^j}{T_i}$. This measure represents the portion of the platform capacity ($u_i \leq m$) requested by the task $\tau_i$ when executing. We also denote by $U \equals \sum_{i=1}^{n} u_{i}$ the total system utilization.  In the following, $P$ denotes the \emph{least common multiple} of all the periods in the task system $\tau$: $P = \operatorname{lcm}\{T_1, \ldots, T_n\}$.

The considered multiprocessor platform contains $m$ unit-capacity processors.

\begin{figure}
	\centering
   \includegraphics[width=6cm]{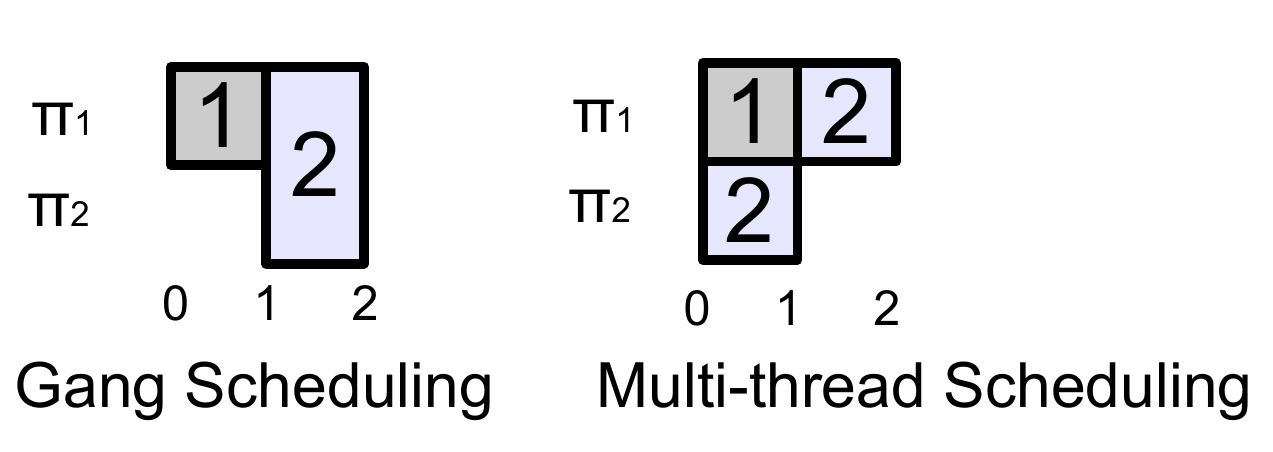}
   \caption{\label{fig:gang-thread}Gang scheduling vs.\@ thread scheduling}
\end{figure}

\paragraph{Gang scheduling versus Thread Scheduling.}
Figure~\ref{fig:gang-thread} illustrates a Gang and a thread scheduling for the ``same'' task set: $\tau_{1} = (0,\{1\},T_{1},D_{1}) > \tau_{2} = (0,\{1,1\},T_{2},D_{2})$. Focusing on $\tau_{2}$, Gang  scheduling has to manage the rectangle $C \times v = 1 \times 2$ while thread scheduling has to manage two 1-unit length threads.
 
From our point of view thread scheduling paradigm resolves the following Gang scheduling drawbacks:
\begin{enumerate}
	\item As exhibited in~\cite{Goossens2010Gang-FTP-schedu} Gang scheduling suffers from \emph{priority inversion}, i.e., a lower priority task can progress while an higher priority active task cannot.
	\item The number of processors required by a task must be not larger than the platform size. 
	\item Because of the requirement that the task must execute on exactly upon $v$ processors simultaneously very often many processors may be left idle while there is active tasks.
	\item As shown in~\cite{Goossens2010Gang-FTP-schedu} Gang FTP schedulers are not \emph{predictable}. Multi-thread FTP schedulers are proven predictable in Section~\ref{sec:tests}.
\end{enumerate}

In this proposal, the exact schedulability tests are based on \emph{feasibility intervals} with the following definition. 

\begin{definition}[Feasibility interval]
For any task system $\tau = \{\tau_1, \ldots, \tau_n\}$ and any $m$-unit capacity multiprocessor platform, the \emph{feasibility interval} is a finite interval such that if no deadline is missed while considering only the processes in this interval no deadline will ever be missed.
\end{definition}

In this paper we consider that the scheduling is \emph{priority-driven}: the threads are assigned \emph{distinct} priority levels. According to these priority levels the scheduler decides at each time $t$ what will be executed on the multiprocessor platform at that time instant: the $m$ highest (if any) priority threads will be executed simultaneously on the given platform. The thread-processor assignment is \emph{univocally} determined by the following rule: ``\emph{higher the priority, lower the  processor index}''. If less than $m$ threads are active the processors with the higher indexes are left idle.  

We  consider \emph{preemptive} scheduling: a higher priority  thread can interrupt the executing lower priority thread.

%%%%%%%%%%%%%%%%%%%%%%%%%%%%%%%%%%%%%%%%%%%%%%%%%%%
\Section{Taxonomy of schedulers}\label{sec:schedulers}

In this work we consider two classes of real-time schedulers for our parallel task model: \emph{hierarchical schedulers} and \emph{global thread schedulers}.

\begin{itemize}
\item At top-level \emph{hierarchical schedulers} manage processes with a process-level scheduling rule and use a second (low-level) scheduling rule two manage threads \emph{within} each process. 
\item \emph{Global thread schedulers} assign priorities to threads regardless of the tasks/subprograms that generated them.
\end{itemize}

In order to define rigorously our Hierarchical and Global schedulers we have to introduce the following schedulers.

\begin{definition}[Fixed Task Priority (FTP)]
A fixed task priority scheduler assigns a fixed and distinct priority to each task before the execution of the system. At run-time each process priority corresponds to its task priority.  
\end{definition}

Among the FTP schedulers we can mention \emph{Rate-Monotonic} (RM)~\cite{Liu:1973:SAM:321738.321743} and \emph{Deadline-Monotonic} (DM)~\cite{Audsley91hardreal-time}. 

\begin{definition}[Fixed Process Priority (FPP)]
A fixed process priority scheduler assigns a fixed and distinct priority to processes upon arrival. Each process preserves the priority level during its entire execution.
\end{definition}

The \emph{Earliest Deadline First} (EDF)~\cite{Liu:1973:SAM:321738.321743} scheduler is an example of FPP scheduler.
 
\begin{definition}[Dynamic Process Priority (DPP)]
A dynamic process priority scheduler assigns, at each time $t$, priorities to the active processes according to their run-time characteristics. Consequently, during its execution, a process may have different priority levels.
\end{definition}

The \emph{Least Laxity First} (LLF) scheduler is a DPP scheduler since the laxity is a dynamic process characteristic (see~\cite{Dertouzos:89} for details). 

In the same vein, the following schedulers can be defined at thread level:

\begin{definition}[Fixed Subprogram Priority (FSP)]
A fixed subprogram priority scheduler assigns a fixed and distinct priority to each subprogram before the execution of the system. At run-time each thread priority corresponds to its subprogram priority. 
\end{definition}

An example of FSP scheduler is the \emph{Longest Subprogram First} scheduler. 

\begin{definition}[Fixed Thread Priority (FThP)]
A fixed thread priority scheduler assigns a fixed and distinct priority to threads upon arrival. Each thread preserves the priority level during its entire execution.
\end{definition}

As far as we know, no FThP scheduler can be defined based \emph{only} on the characteristics of the tasks in our model, at least schedulers \emph{different} than the class FSP.

\begin{definition}[Dynamic Thread Priority (DThP)]
A dynamic thread priority scheduler assigns, at time $t$, priorities to the existing threads according to their characteristics. During its execution, a thread may have different priority levels.
\end{definition}

An example of DThP is LLF applied at thread level.

%%%%%%%%%%%%%%%%%%%%%%%%%%%%%%%%%%%%%%%%%%%%%%%%%%%

\SubSection{Hierarchical schedulers}
Hierarchical schedulers  are built following the next two steps:
\begin{enumerate}
\item at process level, one of the following schedulers is chosen in order to assign priorities to process: FTP, FPP and DPP.
\item for assigning priorities \emph{within} process, one of the following schedulers will be chosen: FSP, FThP, DThP.
\end{enumerate}
In the following an hierarchical scheduler will be denoted by the couple $(\alpha, \beta)$, where $\alpha \in \{\operatorname{FTP}, \operatorname{FPP}, \operatorname{DPP}\}$ and $\beta \in \{\operatorname{FSP}, \operatorname{FThP}, \operatorname{DThP}\}$.

\subsection{Global thread schedulers}
As global thread schedulers, the  FSP, FThP and DThP schedulers can be applied to a set of subprograms or threads regardless of the task that they belong to. 

Notice that some global thread schedulers are identical to some hierarchical ones. E.g., a total order between threads (i.e., a FThP scheduler) can ``mimic'' any hierarchical (FTP, FThP) scheduler.

%%%%%%%%%%%%%%%%%%%%%%%%%%%%%%%%%%%%%%%%%%%%%%%%%%%
\Section{Exact schedulability tests}\label{sec:tests}

In this section, we will present two exact schedulability tests: one for FSP and one for (FTP, FSP) schedulers. 

\SubSection{Exact FSP schedulability test}
The first step into defining the schedulability test for FSP schedulers is to prove that their schedules are periodic. The proof is based on the periodicity of FTP schedules when the FTP is applied to systems $\tau'$ with the following task model: a task $\tau_k' \in \tau'$ is characterized by $(O_k, C_k, D_k, T_k)$~\cite{CGG:11}. This model will be called in this paper the \emph{sequential task model}. The periodicity of FTP schedules for the sequential task model is stated in Theorem~\ref{theor1}. The tasks $\tau_k'$ in $\tau'$ ($1 \leq k \leq n$) are ordered by decreasing priority: $\tau_1' > \cdots > \tau_n'$.

\begin{theorem}[\cite{CGG:11}]
\label{theor1}
For any preemptive FTP scheduling algorithm A, if an asynchronous constrained deadline system $\tau' = \{\tau_1', \ldots, \tau_n'\}$ is A-feasible, then the A-schedule of $\tau'$ on $m$-unit capacity multiprocessor platform is periodic with a period of $P$ starting from instant $S_n$ where $S_i$ is defined as:
\begin{equation}
\begin{cases}
S_1 \equals O_1,  \\
S_i \equals \max\{O_i, O_i + \lceil \frac{S_{i-1} - O_i}{T_i} \rceil \cdot T_i \}, \\
\hspace*{1.5cm} \forall i \in \{2, 3, \ldots, n\}.
\end{cases} \label{eqn1}
\end{equation}
(Assuming that the execution times of each task is constant.)
\end{theorem}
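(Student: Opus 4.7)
The plan is to prove the theorem by induction on $i$, showing that the A-schedule of the subsystem $\{\tau_1', \ldots, \tau_i'\}$ on the $m$-unit platform is periodic with period $P$ starting at $S_i$; the stated result then appears as the case $i = n$.

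For the base case, $\tau_1'$ is alone and has the highest priority, so its schedule consists of executing each of its jobs on processor 1 upon arrival, yielding a schedule that is trivially periodic with period $T_1$ (hence with period $P$) from $O_1 = S_1$. For the inductive step, I would exploit the characteristic property of FTP: since $\tau_i'$ is the lowest priority task of $\{\tau_1', \ldots, \tau_i'\}$, adding it to the system $\{\tau_1', \ldots, \tau_{i-1}'\}$ does not alter the scheduling decisions made for the higher-priority tasks. Consequently, by the induction hypothesis, the schedule restricted to the first $i-1$ tasks is periodic with period $P$ from $S_{i-1}$, and a fortiori from $S_i$ (since $S_i \geq S_{i-1}$ and $T_i \mid P$). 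What remains is to establish periodicity of the executions of $\tau_i'$ itself, starting from $S_i$.

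The key step is a ``renewal state'' argument at time $S_i$. By definition, $S_i$ is an arrival instant of $\tau_i'$, specifically the first arrival at or after $S_{i-1}$; since $T_i$ divides $P$, $S_i + P$ is also an arrival of $\tau_i'$. From the constrained-deadline hypothesis $D_i \leq T_i$ together with A-feasibility, the job of $\tau_i'$ released at $S_i - T_i$ (if any) must finish by $S_i - T_i + D_i \leq S_i$, so $\tau_i'$ has no backlog at time $S_i$; the same holds at $S_i + P$. Combined with the $P$-periodicity of the schedule of $\{\tau_1', \ldots, \tau_{i-1}'\}$, this means that at instants $S_i$ and $S_i + P$ the scheduler faces identical global states: same pending higher-priority workload, no backlog of $\tau_i'$, and a freshly-arriving job of $\tau_i'$. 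Because the scheduling rule is deterministic and priority-driven, an identical state produces identical decisions, so the schedule of the entire system on $[S_i + P, S_i + 2P)$ coincides with that on $[S_i, S_i + P)$.

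The main obstacle I expect is formalising ``identical state'' rigorously enough to invoke determinism: one must verify that, at both $S_i$ and $S_i + P$, every active job of every higher-priority task has the same remaining execution requirement and the same absolute deadline offset relative to the current time, and that the future arrival pattern of each task matches (which follows from $T_k \mid P$ for every $k \leq i$). The secondary point, namely that the recursion for $S_i$ in Equation~(\ref{eqn1}) actually yields the first arrival of $\tau_i'$ at or after $S_{i-1}$, is a straightforward consequence of the definition of the ceiling function; and boundedness of the transient follows since $S_n$ is bounded above by $O_{\max} + n \cdot P$.
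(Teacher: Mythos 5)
The paper itself contains no proof of Theorem~\ref{theor1}: the result is imported from~\cite{CGG:11}, so there is no in-paper argument to compare yours against. Your reconstruction follows the standard route of that line of work --- induction on priority levels, the FTP non-interference property (adding the lowest-priority task does not perturb the higher-priority schedule), the identification of $S_i$ as the first release of $\tau_i'$ at or after $S_{i-1}$, and the absence of $\tau_i'$-backlog at $S_i$ and $S_i+P$ obtained from $D_i\leq T_i$ together with $A$-feasibility. All of that is sound, as is the base case.

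The one genuine gap is precisely the point you flag and then leave open: establishing that every active higher-priority job has the same remaining execution requirement at $S_i$ as its counterpart at $S_i+P$. This is the crux, and the obvious verification fails: a job of $\tau_k'$ ($k<i$) released at some $a$ with $S_{i-1}-D_k<a<S_{i-1}$ may still be active at $S_i$, the service it received over $[a,S_{i-1})$ belongs to the transient part of the schedule, whereas its counterpart is served over $[a+P,S_{i-1}+P)$ inside the periodic regime; the induction hypothesis only relates the schedule at $t$ to the schedule at $t+P$ for $t\geq S_{i-1}$, so comparing \emph{past} service proves nothing. Two repairs work. First, compare \emph{future} service: since every job completes, its remaining work at any $u\geq S_{i-1}$ equals the measure of its execution instants in $[u,\infty)$, and the induction hypothesis maps that set bijectively (via $t\mapsto t+P$) onto the execution instants of the counterpart in $[u+P,\infty)$; hence the remaining works coincide, and your identical-state-plus-determinism argument closes (for FTP the relevant state is just the vector of remaining works plus the future release pattern). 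Second, avoid states entirely: the schedule of $\tau_i'$ on $[S_i,\infty)$ is a deterministic function of its own releases, its (zero) backlog at $S_i$, and the processor-availability pattern left by $\{\tau_1',\ldots,\tau_{i-1}'\}$, and that pattern is $P$-periodic from $S_{i-1}\leq S_i$ irrespective of the higher-priority tasks' internal states. Without one of these two arguments the inductive step is not actually proved.
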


In the following, we consider a task system $\tau$ with $ r \equals \sum_{i=1}^n v_i$ subprograms; each subprogram $q_i^j$ of the task $\tau_i$ is characterized by $(O_i, C_i^j, T_i, D_i)$. A FSP scheduler is used to assign priorities to the $r$ subprograms.  As these priorities are assigned regardless of the tasks to which the subprograms belong to, we can consider a simpler notation: each subprogram $q^{\ell}$ is characterized by $(O_{\ell}, C_{\ell}, T_{\ell}, D_{\ell})$ ($1 \leq {\ell} \leq r$); if $q^{\ell}$ corresponds to the $j^{\operatorname{th}}$ subprogram of $\tau_i$, then $O_{\ell} = O_i$, $C_{\ell} = C_i^j$, $D_{\ell} = D_i$, $T_{\ell} = T_i$. In the following we assume without loss of generality that subprograms are ordered by FSP decreasing priority: $q^1> \cdots > q^r$.

\begin{theorem}
For any preemptive FSP scheduling algorithm A, if an asynchronous constrained deadline system $\tau$ containing $r$ subprograms (regardless of the tasks they belong to) is A-feasible, then the A-schedule of $\tau$ on $m$ unit-capacity multiprocessor platform is periodic with a period of $P$ starting from instant $S_r^{*}$, where $S_i^{*}$ is defined as follows:
\begin{equation}
\begin{cases}
S_1^{*} \equals O_1,  \\
S_j^{*} \equals \max\{O_j, O_j + \lceil \frac{S_{j-1}^{*} - O_j}{T_j} \rceil \cdot T_j \}, \\
\hspace*{1.5cm} \forall j \in \{2, 3, \ldots, r\}.
\end{cases} \label{eqn2}
\end{equation}
(Assuming that the execution times of each subprogram is constant.)
\label{th2}
\end{theorem}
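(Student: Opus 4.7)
The plan is to reduce the FSP scheduling problem on multi-thread tasks to the FTP scheduling problem on the sequential task model, so that Theorem~\ref{theor1} can be invoked directly. The construction is immediate from the simplified notation introduced just before the theorem: starting from $\tau$ with $r$ subprograms $q^1 > \cdots > q^r$ each characterized by $(O_\ell, C_\ell, T_\ell, D_\ell)$, I build the sequential task system $\tau' = \{\tau_1', \ldots, \tau_r'\}$ with $\tau_\ell' \equals (O_\ell, C_\ell, D_\ell, T_\ell)$ and I order the tasks of $\tau'$ by decreasing priority in exactly the same order as the FSP priority on subprograms, i.e., $\tau_1' > \cdots > \tau_r'$.

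The central step is to prove that the FSP A-schedule of $\tau$ and the FTP A-schedule of $\tau'$ coincide on every processor at every instant. Two observations make this essentially a tautology. First, under FSP each thread generated by subprogram $q^\ell$ has release instants $O_\ell + k T_\ell$, execution requirement $C_\ell$ and absolute deadline $O_\ell + k T_\ell + D_\ell$, which are exactly the jobs released by $\tau_\ell'$ under FTP; in particular, the fact that several subprograms may originate from the same multi-thread task $\tau_i$ introduces no synchronization or precedence constraint between the corresponding threads (they share $O_i$, $T_i$ and $D_i$, and $\tau_i$'s process meets its deadline iff all its threads do). Second, the priority-driven rule stated at the end of Section~\ref{sec:model} (the $m$ highest priority active entities execute, with the ``higher the priority, lower the processor index'' tie-breaking) is identical in the two settings. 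A straightforward induction on time instants therefore yields that the set of running entities, and their assignment to processors, is the same in both schedules.

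Once the two schedules are identified, I apply Theorem~\ref{theor1} to $\tau'$: its FTP A-schedule is periodic from $S_r'$ with period $P' = \operatorname{lcm}\{T_1, \ldots, T_r\}$, where $S_r'$ is obtained from the recurrence~(\ref{eqn1}) instantiated on $\tau'$. Because several subprograms share the same period whenever they belong to the same original task, the set $\{T_1, \ldots, T_r\}$ has the same least common multiple as $\{T_1, \ldots, T_n\}$, so $P' = P$. Similarly, the recurrence in~(\ref{eqn1}) on $\tau'$ coincides term-by-term with the recurrence in~(\ref{eqn2}) defining $S_r^*$, so $S_r' = S_r^*$. Transporting the periodicity back across the identification of schedules yields the statement.

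The only delicate point, and the one on which the argument really rests, is the identification of the two schedules. It is conceptually trivial because FSP treats threads as independent priority-bearing entities, but it must be spelled out carefully to confirm that nothing in the multi-thread semantics (shared offset, shared deadline, or the existence of the enclosing process) interferes with the per-thread priority-driven execution rule. Everything else is either a direct application of Theorem~\ref{theor1} or a transparent identification of constants ($P = P'$, $S_r^* = S_r'$).
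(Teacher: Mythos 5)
Your proposal is correct and follows essentially the same route as the paper's own proof: both reduce the FSP schedule of the $r$ subprograms to the FTP schedule of an equivalent sequential task system $\tau'$ with matching priority order, then invoke Theorem~\ref{theor1} and identify $S_r$ with $S_r^{*}$ (and $P'$ with $P$). You merely spell out the schedule identification and the $\operatorname{lcm}$ equality more explicitly than the paper, which simply asserts the equivalence.
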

\begin{proof}
When using FSP, a parallel task system $\tau$ with $n$ tasks and $r$ subprograms can be seen a sequential task system $\tau'$ which contains $r$ sequential periodic tasks $\tau' = \{ \tau_1', \ldots, \tau_r'\}$ such that the task $\tau_{\ell}'$ ($1 \leq \ell \leq r$) has the same characteristics as the ${\ell}^{\operatorname{th}}$ subprogram of $\tau$ ($\tau_{\ell}':~(O_{\ell}, C_{\ell}, T_{\ell}, D_{\ell})$). From the FSP priority assignment on $\tau$, a FTP priority assignment for $\tau'$ can be defined: if $q^1> \cdots > q^r$ according to FSP, the corresponding sequential tasks have the following order $\tau_1' > \cdots > \tau_r'$ according to FTP.

By Theorem~\ref{theor1}, we know that the schedule of FTP on $\tau'$ is periodic with a period of $P$ starting with $S_r$. We can observe that $S_r$ has the same value as $S_r^{*}$. This means that the FSP schedule on $\tau$ is periodic with a period of $P$ starting with $S_r^{*}$.
\end{proof}

Theorem~\ref{th2} considers that execution times $C_{\ell}$ of a subprogram $q^{\ell}$ ($1 \leq \ell \leq r$) are \emph{constant}. In order to define the schedulability test for the FSP schedulers, we have to prove that they are \emph{predictable}.

\begin{definition}[Predictability]
Lets consider the sets of threads $J$ and $J'$ which differ only with regards to their the execution times: the threads in $J$ have executions times less than or equal to the execution times of the corresponding threads in $J'$. A scheduling algorithm $A$ is \emph{predictable} if, when applied independently on $J$ and $J'$, a thread in $J$ finishes execution before or at the same time as the corresponding thread in $J'$.
\end{definition} 

\begin{theorem}
FSP schedulers are \emph{predictable}. 
\label{th3}
\end{theorem}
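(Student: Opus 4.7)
The plan is to reduce predictability of FSP on the parallel task model to predictability of FTP on the sequential task model, using exactly the correspondence constructed in the proof of Theorem~\ref{th2}: an FSP-scheduled parallel task system $\tau$ with $r$ subprograms is identical, run-for-run, to an FTP-scheduled sequential task system $\tau'$ whose $r$ tasks have the same offsets, periods, deadlines, and WCETs as those subprograms. If $J$ and $J'$ are two thread sets differing only in that the execution times in $J$ are componentwise $\le$ those in $J'$, then the corresponding pair of sequential systems $\tau'$ and $\tau''$ differ only in (reduced) execution times. Hence it is enough to show that preemptive FTP on identical multiprocessors, under the priority-driven rule ``higher priority, lower processor index'', is predictable for the sequential task model.

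For the sequential claim, I would argue by induction on the priority rank $\ell = 1, 2, \ldots, r$. Define $A_\ell(t)$ (resp.\@ $A'_\ell(t)$) as the set of time instants in $[0,t]$ at which $\tau'_\ell$ is \emph{pending} (released but not yet completed) in the $J$-schedule (resp.\@ the $J'$-schedule), and let $c_\ell(t), c'_\ell(t)$ denote the cumulative processor time already delivered to $\tau'_\ell$. The invariant I want to carry through the induction is: $A_\ell(t) \subseteq A'_\ell(t)$ for every $t$, which in particular implies that every job of $\tau'_\ell$ completes no later in the $J$-schedule than in the $J'$-schedule. The base case $\ell = 1$ is immediate, since the top-priority task runs whenever pending and its workload in $J$ is smaller. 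For the inductive step, at every $t$ the FTP scheduler runs $\tau'_\ell$ on some processor iff it is pending and the number of strictly higher-priority pending tasks is less than $m$; by the inductive hypothesis the higher-priority pending sets shrink when we pass from $J'$ to $J$, so the set of instants on which $\tau'_\ell$ is \emph{blocked} by higher priorities in $J$ is contained in the corresponding set for $J'$, while its per-job workload has only decreased. A routine pointwise comparison of the nondecreasing functions $c_\ell$ and $c'_\ell$ then establishes the inclusion $A_\ell(t) \subseteq A'_\ell(t)$.

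Translating back through the Theorem~\ref{th2} correspondence, the completion time of each thread in $J$ is bounded by the completion time of the corresponding thread in $J'$, which is exactly what the definition of predictability requires for FSP.

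The main obstacle, in my view, is choosing the correct invariant for the induction. A naive attempt to compare only the residual workloads $C_\ell - c_\ell(t)$ pointwise breaks down because the two schedules can interleave different higher-priority jobs in different patterns; the invariant that survives is the \emph{set inclusion} of pending intervals level by level, since it simultaneously captures ``starts at the same time'' (arrivals are identical in $J$ and $J'$) and ``finishes no later''. A minor technical care is needed at instants where a job completes or a higher-priority job is released, so that the strict-inequality vs.\@ non-strict comparisons on the priority queue are handled uniformly; but because priorities are distinct and the processor-assignment rule is deterministic, there are no tie-breaking ambiguities, and the argument goes through without needing a separate case analysis.
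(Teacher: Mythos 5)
Your reduction is exactly the paper's: Theorem~\ref{th2}'s correspondence turns the FSP-scheduled parallel system into a globally FTP-scheduled sequential system $\tau'$ of $r$ tasks, so predictability of FSP follows from predictability of preemptive global FTP on $m$ identical processors. The difference is what happens next: the paper closes the argument by citing Ha and Liu~\cite{DBLP:conf/icdcs/HaL94} for multiprocessor FTP predictability, whereas you reprove that result from scratch. Your induction on priority rank with the pending-set-inclusion invariant is sound and is essentially the Ha--Liu argument; the one place your sketch conceals the real content is the final ``routine pointwise comparison'' of $c_\ell$ and $c'_\ell$. The clean way to finish it is: by the inductive hypothesis the instants at which $m$ or more higher-priority tasks are pending in the $J$-schedule form a subset of the corresponding instants in the $J'$-schedule, so on any interval starting at the job's arrival the cumulative processor time offered to the level-$\ell$ job in $J$ is at least that offered in $J'$; since the job runs whenever it is pending and offered a processor, and its demand has only decreased, it cannot complete later --- and this also silently handles the successive jobs of the same sequential task, which your set formulation covers only implicitly. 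In short, your proof is correct and strictly more self-contained than the paper's; the paper's version buys brevity by importing the multiprocessor predictability lemma, yours buys independence from that reference at the cost of the extra induction.
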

\begin{proof}
We mentioned in the proof of the Theorem~\ref{th2} that the task system $\tau$ containing $r$ subprograms can be seen as a system $\tau'$ of $r$ sequential tasks such that a task $\tau_{\ell}'$ inherits the characteristics of the corresponding subprogram $q^{\ell}$ of $\tau$ ($1 \leq \ell \leq r$). A FTP priority assignment for $\tau'$ can be built following the priorities assigned by FSP to the corresponding subprograms in $\tau$: $q^1> \cdots > q^r$ gives $\tau_1' > \cdots > \tau_r'$.

In \cite{DBLP:conf/icdcs/HaL94} it is proven that for systems like $\tau'$, FTP schedulers are predictable on $m$ unit-capacity multiprocessor platforms. Since $\tau$ is equivalent to $\tau'$ and the FTP scheduler assigns the same priorities to sequential tasks as FSP to the corresponding subprograms, FSP schedulers are also predictable.
\end{proof}

Based on Theorem~\ref{th2} and \ref{th3}, we can define an exact feasibility test for FSP schedulers.
\begin{schedulabilityTest}
For any preemptive FSP scheduler $A$ and for any $A$-feasible asynchronous constrained deadline system $\tau$ containing $r$ subprograms (regardless of the tasks they belong to) on a $m$ unit-capacity multiprocessor platform, $[0, S_r^{*} + P)$ is a feasibility interval, where $S_i^{*}$ is defined by equation~\ref{eqn2}.
\end{schedulabilityTest}
\begin{proof}
This is a direct consequence of Theorems~\ref{th2} and~\ref{th3}.
\end{proof}

\SubSection{Exact (FTP, FSP) schedulability test}
The first step in the definition of the exact schedulability test for the (FTP, FSP) schedulers is to prove the periodicity of the feasible schedules.

\begin{theorem}
For any preemptive (FTP, FSP) scheduling algorithm A, if an asynchronous constrained deadline parallel task system $\tau = \{\tau_1, \ldots, \tau_n\}$  is A-feasible, then the A-schedule of $\tau$ on $m$ unit-capacity multiprocessor platform is periodic with a period of $P$ starting from instant $S_n$, where $S_i$ is defined by equation~\ref{eqn1} and tasks are ordered by decreasing priority: $\tau_1 > \tau_2 > \cdots > \tau_n$.
\label{th5}
\end{theorem}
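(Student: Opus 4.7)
The plan is to reduce the hierarchical $(\text{FTP}, \text{FSP})$ schedule on $\tau$ to a global FSP schedule on the same set of $r = \sum_{i=1}^n v_i$ subprograms, and then invoke Theorem~\ref{th2}. Under an $(\text{FTP}, \text{FSP})$ scheduler, each thread inherits a fixed priority determined by the pair (priority of its parent task under FTP, priority of its subprogram under FSP). Since both components are assigned off-line, this induces a total order on the $r$ subprograms that is itself an FSP priority assignment: for $i < i'$, every subprogram of $\tau_i$ ranks strictly above every subprogram of $\tau_{i'}$, and within each $\tau_i$ the FSP ordering is preserved. Because the scheduler's run-time decisions depend only on the current total order among active threads, the $(\text{FTP}, \text{FSP})$ schedule of $\tau$ coincides with the global FSP schedule of the same subprograms under the induced priority.

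By Theorem~\ref{th2}, this schedule is periodic with period $P$ from instant $S_r^{*}$, where the sequence $S_\ell^{*}$ is defined by equation~\ref{eqn2} applied to the subprograms in the induced-priority order. It remains to show that $S_r^{*} = S_n$. I would prove by induction on $i$ that after equation~\ref{eqn2} has processed all $v_i$ subprograms of $\tau_i$ (which occur consecutively in the induced ordering), the current value equals $S_i$ as defined by equation~\ref{eqn1}. The base case $i=1$ is immediate since $S_1^{*} = O_1 = S_1$, and subsequent subprograms of $\tau_1$ share offset $O_1$ and period $T_1$, so the recursion leaves the value unchanged. For the inductive step, the first subprogram of $\tau_i$ is encountered when the current value is $S_{i-1}$, and the recursion produces exactly $\max\{O_i, O_i + \lceil (S_{i-1} - O_i)/T_i \rceil T_i\} = S_i$; the remaining subprograms of $\tau_i$ then leave the value unchanged, since $S_i$ is by construction a release instant of $\tau_i$.

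The main obstacle is conceptual rather than computational: one must carefully justify that the hierarchical decomposition imposes nothing on the scheduler beyond the induced flat priority order. This is essentially captured by the fact that FTP and FSP both yield static priorities, so their composition is also static, and the rule \emph{``execute the $m$ highest-priority active threads''} applied to the induced total order produces exactly the same run-time decisions as the two-level hierarchical scheduler. Once that reduction is accepted, the remainder is a short bookkeeping computation, and Theorem~\ref{th2} supplies both the period and, via the identification $S_r^{*} = S_n$, the starting instant claimed in the statement.
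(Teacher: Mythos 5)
Your proposal is correct and follows essentially the same route as the paper's own proof: flatten the hierarchical (FTP, FSP) priorities into the induced global FSP order, invoke Theorem~\ref{th2}, and then show $S_r^{*}=S_n$ by observing that the recursion of equation~\ref{eqn2} is stationary across the consecutive block of subprograms of each task (since they share offset and period and the current value is already a release instant of that task). Your inductive bookkeeping matches the paper's equation~\ref{eqn6} and the chain $S_1^{*}=S_1,\ S_{v_1+1}^{*}=S_2,\ \ldots$, so no further comparison is needed.
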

\begin{proof}
Lets consider that the tasks in $\tau$ and their subprograms are ordered by decreasing priority: 
\begin{equation}
\tau_1 > \tau_2 > \cdots > \tau_n \text{   with   } q_i^1 > \cdots > q_i^{v_i}, \, \, \forall 1 \leq i \leq n. \nonumber
\end{equation}
Following these priority orders, we can define a FSP scheduler $B$ which assigns the following priorities to the $r = \sum_{i=1}^n v_i$ subprograms of $\tau$: 
\begin{eqnarray}
q_1^1 >  q_1^2 > \cdots > q_1^{v_1} > q_2^1 > q_2^2 > \cdots > q_2^{v_2} > \cdots \nonumber \\
\cdots > q_{n-1}^1 > \cdots > q_{n-1}^{v_{n-1}}  > q_n^1 > \cdots > q_n^{v_n}. 
\label{eqn5}
\end{eqnarray}

The FSP schedulers assign priorities to subprograms regardless of the tasks they belong to. So we can rewrite equation~\ref{eqn5} regardless of the tasks $\tau_1, \ldots, \tau_n$:
\begin{equation}
q^1 > \cdots > q^{v_1} > \cdots > q^{1+\sum_{i=1}^{n-1} {v_i} }> \cdots > q^r. \nonumber
\end{equation}

By Theorem~\ref{th2}, the schedule generated by $B$ is periodic with a period of $P$ from $S_r^{*}$. We can observe that the $S_j^{*}$ quantity defined by equation~\ref{eqn2} represents the instant of the first arrival of $q^j$ at or after time instant $S_{j-1}^{*}$. Since all the subprograms belonging to a task $\tau_i$ ($1 \leq i \leq n$) have the same activation times and the same periods and $B$ assigns consecutive priorities to the subprograms of the same task (as seen in equation~\ref{eqn5}):
\begin{equation}
 S_j^{*} = S_{j-1}^{*}, \forall j: \text{ } (1+ \sum_{k=1}^{i-1}{v_k})  \leq j \leq \sum_{k=1}^{i}{v_k}
 \label{eqn6}
\end{equation}
 and $1 \leq i \leq n$.

Furthermore, we can observe that $S_1^{*} = S_1 = O_1$. From this fact and equation~\ref{eqn6}, we can conclude:
\begin{eqnarray*}
 S_1^{*} = S_1 \Rightarrow \\
 S_{v_1 +1}^{*} = S_2 \Rightarrow \\
 \vdots \\
 S_{1+\sum_{i=1}^{n-1} v_i}^{*} = S_n.
 \end{eqnarray*}
 \label{eqnARR}
The $B$-schedule is then periodic with a period of $P$ starting from $S_n$. Since the $B$-schedule is the same as the one generated by $A$, the $A$-schedule is also periodic with a period of $P$ starting from $S_n$.
\end{proof}

We will now prove that the (FTP, FSP) schedulers are also predictable.

\begin{theorem}
(FTP, FSP) schedulers are \emph{predictable} .
\label{th6}
\end{theorem}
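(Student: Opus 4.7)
The plan is to reduce the predictability of a (FTP,~FSP) scheduler $A$ to the predictability of an equivalent FSP scheduler, mirroring the reduction already used in the proof of Theorem~\ref{th5}. First I would order the tasks by their FTP priorities and, within each task, order its subprograms by the FSP priorities, and then define a flat FSP scheduler $B$ whose total priority order on the $r = \sum_{i=1}^n v_i$ subprograms is exactly the lexicographic combination displayed in equation~\ref{eqn5}.

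The key step is to observe that $A$ and $B$ generate pointwise identical schedules on \emph{any} workload: at each instant $t$ the $m$ highest-priority ready threads selected by $A$ (top-level FTP choice, broken inside each task by FSP) coincide with the $m$ highest-priority ready threads selected by $B$ (direct lex order). Since the processor assignment rule ``higher priority $\Rightarrow$ lower processor index'' is common to both schedulers, the two schedules agree at every time instant. Importantly, this equivalence depends only on the static priority data and not on the actual execution times of the threads.

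Having $A \equiv B$ for every execution-time vector, I would then invoke Theorem~\ref{th3}: $B$ is a FSP scheduler and hence predictable. That is, for the two workloads $J$ and $J'$ involved in the definition of predictability, each thread in the $B$-schedule of $J$ completes no later than the corresponding thread in the $B$-schedule of $J'$. Transporting this completion-time comparison across the $A\equiv B$ equivalence immediately yields predictability of $A$.

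The main obstacle I anticipate is subtle: predictability is a statement about \emph{two} workloads $J$ and $J'$, so one must be careful to apply the equivalence $A \equiv B$ to both instances simultaneously. Because the construction of $B$ uses only the fixed task-level and subprogram-level priorities and does not reference the execution times at all, this concern dissolves once pointed out, but it is worth one explicit sentence in the final proof.
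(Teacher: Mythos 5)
Your proof is correct and follows exactly the paper's argument: reduce the (FTP,~FSP) scheduler to the equivalent flat FSP scheduler built in the proof of Theorem~\ref{th5}, then invoke the predictability of FSP schedulers from Theorem~\ref{th3}. Your added remark that the equivalence must be applied to both workloads $J$ and $J'$ (and holds because the construction is independent of execution times) is a useful explicit justification that the paper leaves implicit.
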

\begin{proof}
Since based on any (FTP, FSP) scheduler we can define a FSP scheduler as shown in the proof of the Theorem~\ref{th5} and since, by Theorem~\ref{th3}, FSP schedulers are predictable, (FTP, FSP) schedulers are predictable as well.
\end{proof}

We will now define the exact schedulability test for (FTP, FSP) schedulers. 
\begin{schedulabilityTest}
For any preemptive (FTP, FSP) scheduler $A$ and for any $A$-feasible asynchronous constrained deadline parallel task system $\tau = \{\tau_1, \ldots, \tau_n\}$ on a $m$ unit-capacity multiprocessor platform, $[0, S_n + P)$ is a feasibility interval, where $S_i$ is defined by equation~\ref{eqn1}.
\label{sched2}
\end{schedulabilityTest}
\begin{proof}
This is a direct consequence of Theorems~\ref{th5} and~\ref{th6}.
\end{proof}

%%%%%%%%%%%%%%%%%%%%%%%%%%%%%%%%%%%%%%%%%%%%%%%%%%%
%\section{Gang Scheduling}
%Gang schedulers are another class of real-time parallel schedulers. Gang schedulers consider that each parallel real-time task $\tau_i$ is characterized by a number of subprograms $v_i$ that have to execute simultaneously. Therefore, there is no need to assign priorities at subprogram level, but only at task level. Furthermore, each subprogram of $\tau_i$ has the same worst-case execution time $C_i$.
%
%Lets consider the task system: 
%\begin{equation}
%\tau = \{ \tau_1 (0, \{q_1^1\}, 1, 2), \tau_2 (0, \{q_2^1, q_2^2\}, 1, 2)\}, \nonumber
%\end{equation} where $C_1^1 =1, C_2^1 =1$ and $C_2^2 = 1$.

\SubSection{Gang and thread scheduling are incomparable} %aujourd'hui

In this section we will show that Gang FTP and thread hierarchical (FTP,~FSP) schedulers are incomparable ---in the sense that there are task systems which are schedulable using Gang scheduling approaches and not by thread scheduling approaches, and conversely.

The considered FTP scheduler is DM~\cite{Audsley91hardreal-time}: the priorities assigned to tasks by DM are inversely proportional to the relative deadlines. The FSP scheduler is called \emph{Index Monotonic} (IM) and it assigns priorities as follows: \emph{the lower the index of the subprogram within the task, the higher the priority}. 

In the following examples, the task offsets of the considered systems are equal to 0 and the feasible schedules are periodic from 0 with a period of $P$ (Theorem~\ref{th5} and~\cite{Goossens2010Gang-FTP-schedu}).
%The systems are executed during the time interval $[0,P)$ since in the following examples all task offsets are~0.% ($[0, S_n + P)$ is a feasibility interval also for Gang DM~\cite{Goossens2010Gang-FTP-schedu}).

\begin{example} This first example presents a task system that is unschedulable by Gang DM, but schedulable by (DM,~IM) on a 2-unit capacity multiprocessor platform. The tasks in the system $\tau = \{ \tau_1, \tau_2, \tau_3\}$  have the following characteristics: $\tau_1: (0, \{2\}, 3, 3)$, $\tau_2: (0, \{3\}, 4, 4)$ and  $\tau_3: (0, \{2,2\}, 12, 12)$. According to DM $\tau_1 > \tau_2 > \tau_3$.
\begin{figure}[!h]
	\centering
   \includegraphics[width=6cm]{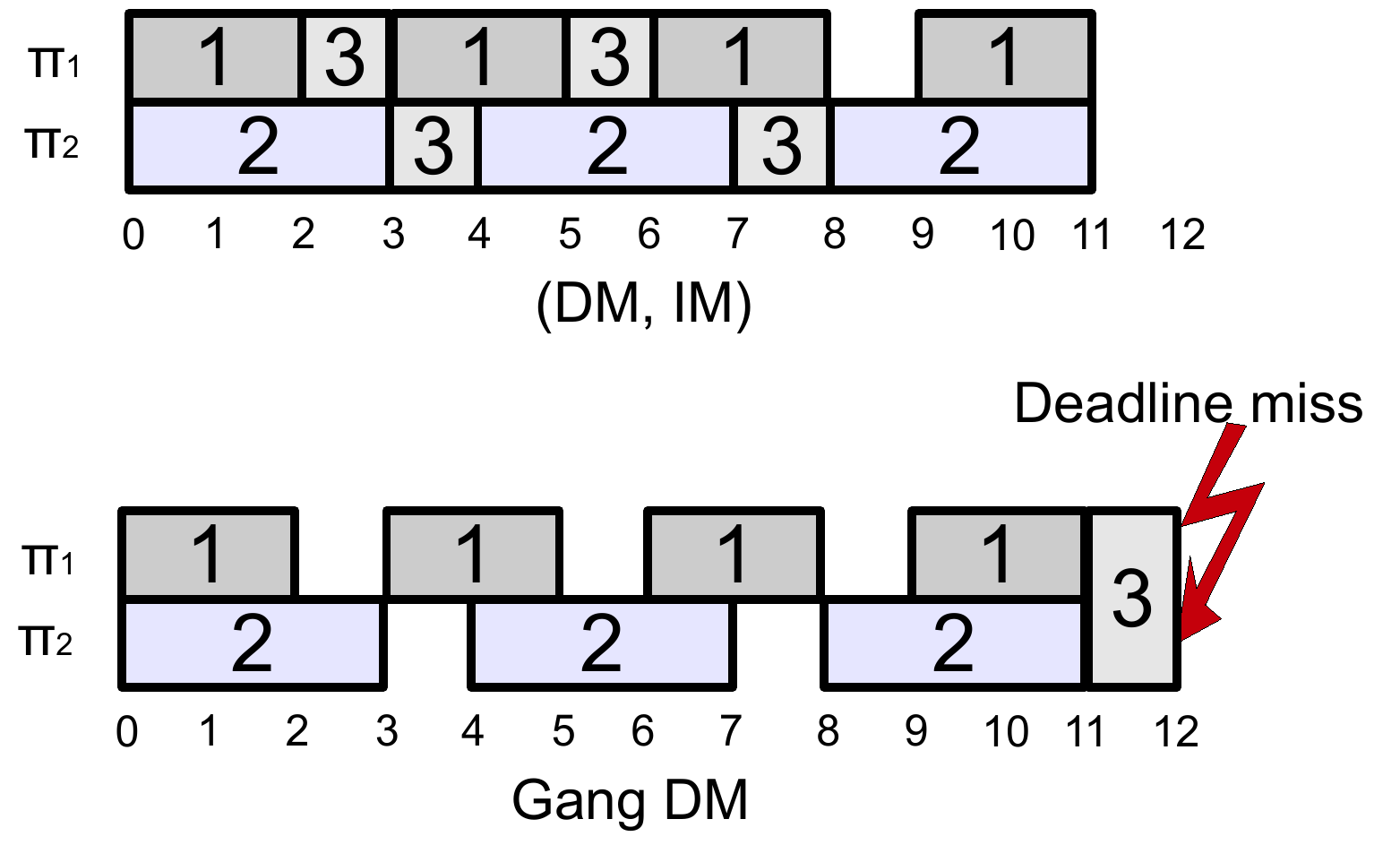}
   \caption{Gang DM unfeasible, (DM,~IM) feasible}
   \label{Exem1}
\end{figure}

We can observe in figure~\ref{Exem1} that according to Gang DM, task $\tau_3$ has to wait for 2 available processors simultaneously in order to execute. This is the case at time instant 11; though, at time 12 the task has unfinished execution demand and it misses its deadline.

In the case of (DM,~IM), $\tau_1$ and $\tau_2$ execute at the same moments and on the same processors as in Gang DM. The difference is that $\tau_3$ can start executing its first process at time instant 2 since one processor is available. Taking advantage of the fact that the processors are left idle by $\tau_1$ and $\tau_2$ at some moments in time, the first process of $\tau_3$ (which is the only $\tau_3$ process in the interval $[0, 12)$) finishes execution at time instant 8. No deadline is missed, therefore, the system is schedulable by (DM,~IM).
\end{example}

\begin{example} The second example presents a task system $\tau = \{\tau_1, \tau_2, \tau_3\}$ which is schedulable with Gang DM, but unschedulable with (DM,~IM) on a 3-unit capacity multiprocessor platform. The tasks of the system have the following characteristics: $\tau_1: (0, \{3,3\}, 4, 4)$,  $\tau_2: (0, \{1,1\}, 5, 5)$ and  $\tau_3: (0, \{9\}, 10, 10)$. According to DM $\tau_1 > \tau_2 > \tau_3$.

In figure~\ref{Exem2}, we can observe that according to Gang DM, at instant 0, $\tau_1$ is assigned to 2 of the 3 processors in the platform. Since there is only one processor left, $\tau_2$ cannot execute, therefore $\tau_3$ starts its execution on the third processor. At time instant 3, 2 processors are available and, consequently, $\tau_2$ may start executing, etc. No deadline is missed in the time interval $[0, 12)$, therefore the system is Gang DM feasible. 

According to (DM,~IM), even if $\tau_1$ occupies 2 processors of the 3 in the platform, $\tau_2$ may start executing on the third a first thread from time instant 0 to time instant 1. The second thread of its first process will execute on the third processor from time instant 1 to time instant 2. We can conclude that $\tau_3$ will miss its deadline at time instant 10 since it has 9 units of execution demand and only 6 time units available \emph{until} its deadline.
\begin{figure}[!h]
	\centering
   \includegraphics[width=8cm]{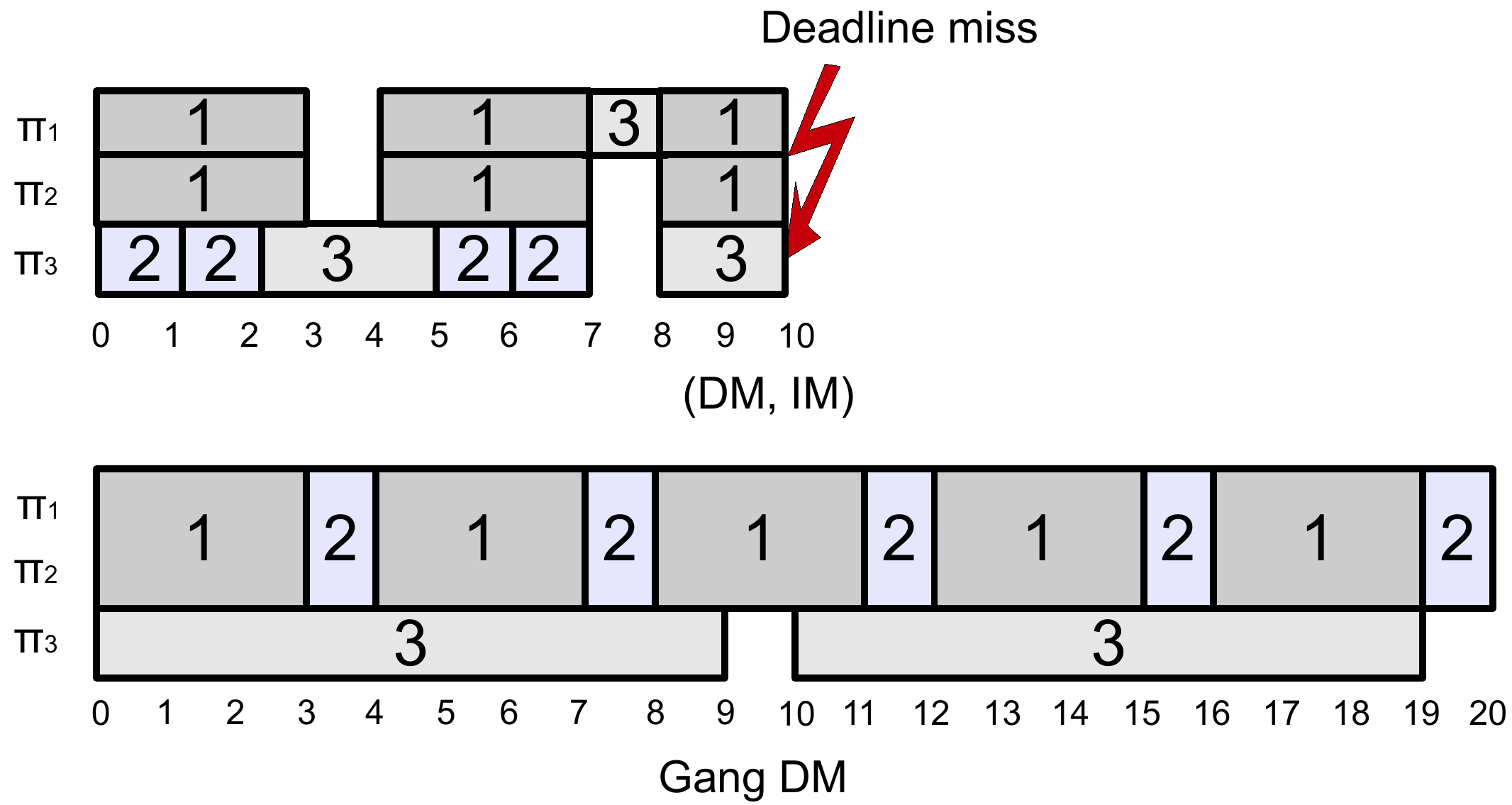}
   \caption{Gang DM feasible, (DM,~IM) unfeasible}
   \label{Exem2}
\end{figure}
\end{example}

%%%%%%%%%%%%%%%%%%%%%%%%%%%%%%%%%%%%%%%%%%%%%%%%%%%
\Section{Empirical Study}\label{sec:study}
The purpose of this empirical study is to evaluate the performance of multi-thread schedulers compared with the one of Gang schedulers. More specifically, the chosen multi-thread scheduler is the (FTP, FSP) scheduler (DM,~IM). Among the Gang schedulers, we consider Gang DM. 

%In~\cite{Goossens2010Gang-FTP-schedu} it is proven that for the asynchronous constrained deadline systems, the schedule generated by Gang~FTP is periodic with a period $P$ starting from $S_n$ with the assumption that the execution times are constant ($S_i$ is defined by equation~\ref{eqn1} and tasks are ordered by decreasing priority). 

Since Gang FTP schedulers are not predictable, in this study we consider constant execution times. From~\cite{Goossens2010Gang-FTP-schedu} and the Schedulability test~\ref{sched2}, we know that we have to simulate both Gang FTP and (FTP, FSP) schedulers in the time interval $[0, S_n + P)$ in order to conclude if the task system is feasible with one of them or with both.

In this empirical study, the execution times of all the subprograms of a task are considered equal. 
\SubSection{Evaluation criteria}
\label{subsect:evcrit}

Gang DM and (DM,IM) are evaluated according to the following criteria:
\begin{itemize}
\item success ratio;
\item worst-case response time of the lowest priority task in the system.
\end{itemize}

\begin{definition}[Response time]
The response time of a process represents the duration between the arrival of the process and the moment it finishes execution.
\end{definition}

\begin{definition}[Worst-Case Response Time (WCRT)]
The worst-case response time of a task is the maximum response time of its processes.
\end{definition}

\paragraph{Success ratio.}  The success ratio represents the portion of successfully scheduled task systems and it is defined as follows:
\begin{equation}
\frac{\operatorname{number \, of \, successfully \, scheduled \, task \, systems}}{\operatorname{total \, number \, of \, considered \, task \, systems}} \nonumber
\end{equation}

\paragraph{Worst-case response time.}  If system $\tau$ is schedulable, the worst-case response times of a task $\tau_i \in \tau$ for Gang DM and (DM,~IM) are calculated according to its processes within the time interval $[0, S_n + P)$. %Then, the average of the WCRTs of all tasks in $\tau$ is computed for each scheduler. These averages are denoted by: $\operatorname{av_{Gang_DM}}(\tau)$ and $\operatorname{av_{(DM,IM)}}(\tau)$, respectively. Finally, the task systems are ordered by the total utilization value $U$ and for the systems having the same utilization, the average of $$

For each feasible task system with both Gang DM and (DM,~IM) we compare the WCRTs (computed for each of the two schedulers) of the lowest priority task. For a given system utilization, we count separately the tasks systems where the (DM,~IM) WCRT is \emph{strictly} inferior to the one computed under Gang DM and inversely. Consequently, the uncounted task systems are those where the computed WCRTs are equal for the two schedulers.

%Lets consider only the task systems that are schedulable with both Gang DM and (DM,~IM). For each system, the average value of the WCRTs is calculated separately for Gang~DM and for (DM,~IM). Then, task systems having the same total utilization $U$ are gathered into subsets. For each subset, final averages are computed based on the previously calculated values: one for Gang DM and one for (DM,~IM).

%%%%%%%%%%%%%%%%%%%%%%%%%%%%%%%%%%%%%%%%%%%%%%%%%%%
\SubSection{Task system generation methodology}
The procedure for task system generation is the following: individual tasks are generated and added to the system until the total system utilization exceeds the platform capacity ($m$).

The characteristics of a task $\tau_i$ are integers and they are generated as follows:
\begin{enumerate}
\item the period $T_i$ is uniformly chosen from $[1,250]$;
\item the offset is uniformly chosen from $[1, T_i]$;
\item the utilization $u_i$ of the task is inferior to $m$ and it is generated using the following distributions: 
\begin{itemize}
\item uniform distribution between $[\frac{1}{T_i}, m]$;
\item bimodal distribution: light tasks have an uniform distribution between $[\frac{1}{T_i}, \frac{m}{2}]$, heavy tasks have an uniform distribution between $[\frac{m}{2}, m]$; the probability of a task being heavy is of $\frac{1}{3}$;
\item exponential distribution of mean $\frac{m}{4}$;
\item exponential distribution of mean $\frac{m}{2}$;
\item exponential distribution of mean $\frac{3 \cdot m}{4}$.
\end{itemize}
\item $v_i$ is uniformly chosen from $[1,m]$;
\item since we consider that all the subprograms of a task $\tau_i$ have equal execution times, it is sufficient to compute a single execution time value: $C_i = \frac{u_i \cdot T_i}{v_i}$;
\item the deadline is uniformly chosen between $[C_i, T_i]$.
\end{enumerate}
We use several distributions (with different means) in order to generate a wide variety of task systems and, consequently, to have more accurate simulation results. 

The generated systems have a least common multiple of the task periods bounded by 5,000,000. During simulations, we considered multiprocessor platforms containing 2, 4, 8 and 16 processors. A total of 450,000 task systems were generated.
%%%%%%%%%%%%%%%%%%%%%%%%%%%%%%%%%%%%%%%%%%%%%%%%%%%
\SubSection{Results}
Firstly, we will analyze the success ratio of the two schedulers.

\paragraph{Success ratio.} Each of the following figures contains 3 plots: one represents the success ratio of the (DM,~IM) multi-thread scheduler, a second one the success ratio of Gang DM and a third one expresses the ratio of task systems successfully scheduled by both of them. 

The performance gap between the two schedulers in terms of success ratio is growing as the number of processors grows, as shown in figures~\ref{Graph1}--\ref{Graph4}. In the case of the 2 processors platform, they have similar performances. On 4 processors, the (DM,~IM) successfully schedules at most 10\% more task systems than Gang DM (this is attained at utilization 2.8). On 8 processors it schedules 12\% more task systems (at utilization 5.2) and on 16 processors it can schedule 14\% more task systems than Gang DM (at instant 10.4).

We can also observe in each of the figures~\ref{Graph1}--\ref{Graph4} that (DM,~IM) and Gang DM are incomparable since the corresponding plots are above the ``both'' plot in each of these figures. Moreover the amount of \emph{additional} systems that thread scheduling can manage is quite better. In figure~\ref{Graph1} we can see that on a 2 processors platform, the amount of additional systems that (DM,~IM) can schedule is at most 2 times higher (at utilization 1.6) than the one of additional systems scheduled by Gang DM. In the case of a 4 processors platform, (DM,~IM) can perform 4.3 times better than Gang DM (at utilization 2.8), on 8 processors 5.4 times (at utilization 5.2) and, finally, on 16 processors it can perform 7.5 times better than Gang DM (at utilization 10.4). 

\begin{figure}[!h]
	\centering
   \includegraphics[width=8cm]{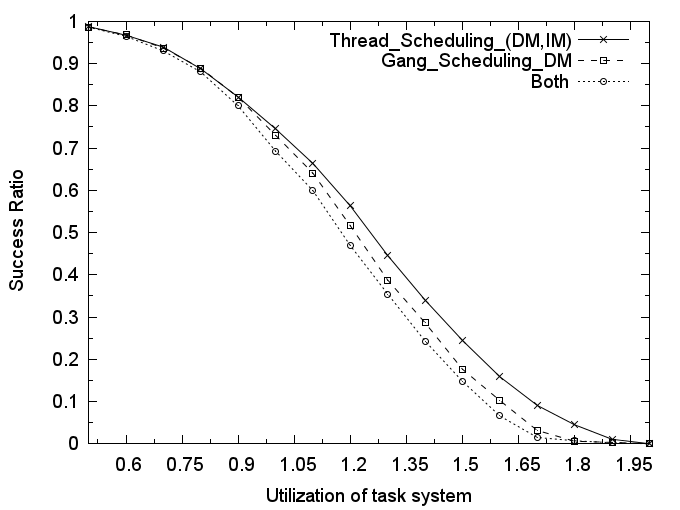}
   \caption{Success Ratio: 2 processors}
   \label{Graph1}
\end{figure}

\begin{figure}[!h]
	\centering
   \includegraphics[width=8cm]{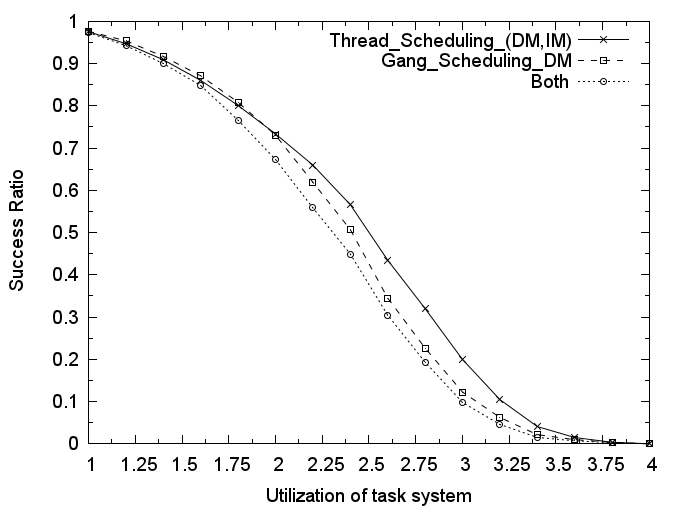}
   \caption{Success Ratio: 4 processors}
   \label{Graph2}
\end{figure}

\begin{figure}[!h]
	\centering
   \includegraphics[width=8cm]{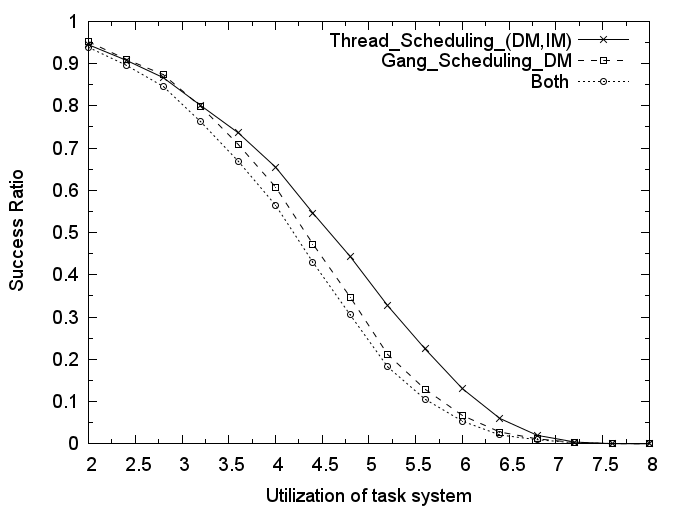}
   \caption{Success Ratio: 8 processors}
   \label{Graph3}
\end{figure}

\begin{figure}[!h]
	\centering
   \includegraphics[width=8cm]{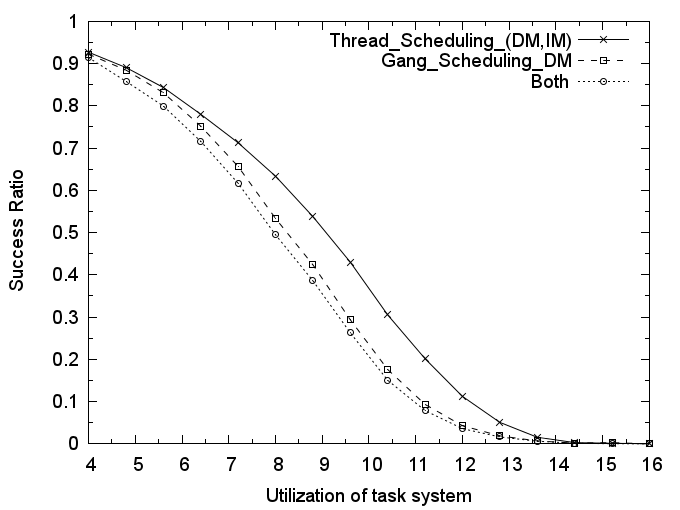}
   \caption{Success Ratio: 16 processors}
   \label{Graph4}
\end{figure}

\paragraph{Response time.} In the following we will reference the figures~\ref{Graph5}--\ref{Graph7}. The utilization of the considered systems is superior to 25\% and inferior to 90\% of the platform capacity.

In each figure, there are two plots: one that marks the portion of task systems (among those which are schedulable by both Gang DM and (DM,~IM)) where the (DM,~IM) WCRT of the lowest priority task is \emph{strictly} inferior to the one computed under Gang DM; a second plot marks the contrary behavior.

The results of our simulations showed that, for systems executed on 2 processors, the WCRTs of the lowest priority task under the two schedulers are equal. 

On 4, 8 and 16-unit capacity multiprocessor platforms, (DM,~IM) outperforms Gang DM as we can see in figures~\ref{Graph5}--\ref{Graph7}. For all of the three cases, the (DM,~IM) WCRT is inferior to the Gang DM WCRT in at most 50\% of the considered task systems. The lowest performance gap is observed on the 4 processors platform (at utilization 1.4) and, even in that case, (DM,~IM) performs 8\% better than Gang DM.

\begin{figure}[!h]
	\centering
   \includegraphics[width=8cm]{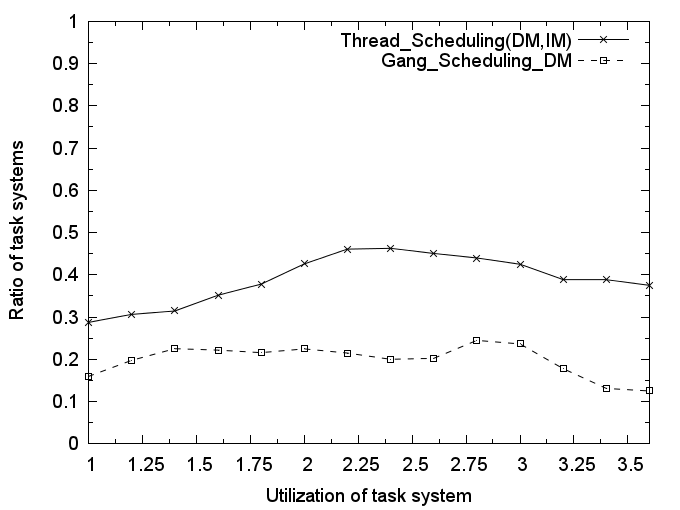}
   \caption{WCRT comparison: 4 processors}
   \label{Graph5}
\end{figure}

\begin{figure}[!h]
	\centering
   \includegraphics[width=8cm]{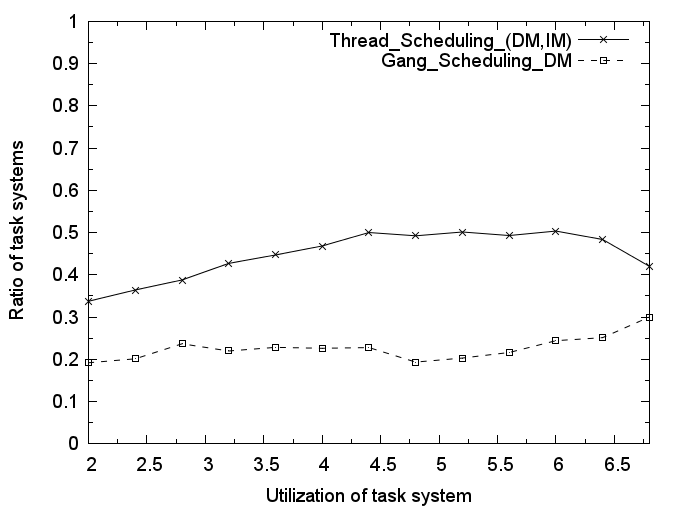}
   \caption{WCRT comparison: 8 processors}
   \label{Graph6}
\end{figure}

\begin{figure}[!h]
	\centering
   \includegraphics[width=8cm]{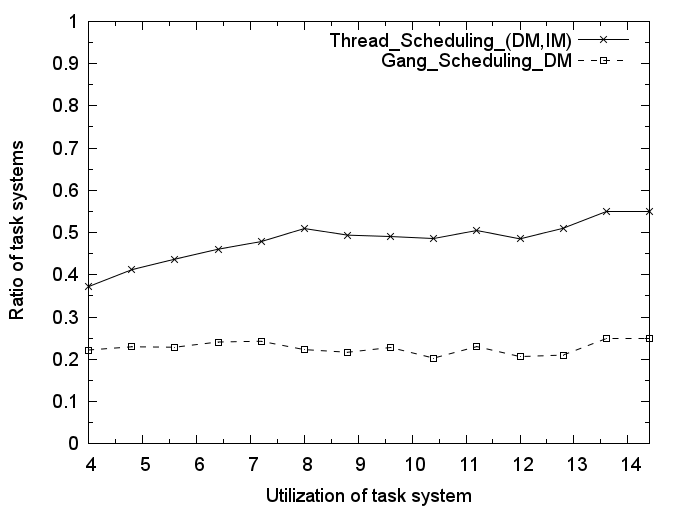}
   \caption{WCRT comparison: 16 processors}
   \label{Graph7}
\end{figure}

%%%%%%%%%%%%%%%%%%%%%%%%%%%%%%%%%%%%%%%%%%%%%%%%%%%
\Section{Conclusions and future work}\label{sec:conclusion}

In this work we considered the multi-thread scheduling for parallel real-time systems. The main advantage of this model is that it does not require \emph{all} threads of a same task to execute simultaneously as Gang scheduling does. 

We defined in this paper the several types of priority-driven schedulers dedicated  to our parallel task model and scheduling method. We distinguished between hierarchical schedulers (that firstly assign distinct priorities at system level and secondly, within each task) and global thread schedulers (that do not take into account the original tasks when priorities are assigned at thread level).

We showed that, contrary to Gang FTP, the hierarchical and global thread schedulers based on FTP and FSP are predictable. Based on this property and the periodicity of their schedules, we defined two exact schedulability tests.

Even though the Gang and multi-thread schedulers are, as we have shown, incomparable, the empirical study confirmed the intuition that our approach outperforms Gang scheduling. In terms of success ratio, the performance gap increases as the number of processors grows.

\paragraph{Future work.}
While we have shown the benefits of our thread scheduling model in comparison with Gang scheduling, in further work we would like to extend our model to have more realistic parallel task models. In particular we aim to consider \emph{multi-phase} tasks, in the sense that a task can have an arbitrary number of phases that have to be executed \emph{sequentially}. Each phase would be characterized by its own degree of parallelism. E.g., the task could be composed of initialization, computing and finalization phases; solely the computing phase can be parallelized, others are sequential. Unfortunately, we have to report a first negative result, i.e., multi-phase multi-thread hierarchical schedulers are not predictable.

We extend the model as follows, each task  $\tau_{i}$ is characterized by: 
\begin{equation*}
 (O_i, \{\phi_i^1, \phi_i^2, \ldots, \phi_i^{\ell_{i}}\}, D_i, T_i)
\end{equation*}
with the interpretation that the task is composed by a sequence of $\ell_{i}$ phases, each phase $\phi_{i}^{j}$ is characterized by a set of subprograms $q_{i,j}^1, q_{i,j}^2, \ldots, q_{i,j}^{v_{i,j}}$ ($v_{i,j}$ is the number of subprograms in the phase $j$ of task $i$). Each thread is defined as our original model.

Consider for instance $\tau_{1} = (0, \phi_{1}^{1}=\{2\}, \phi_{1}^{2}=\{2,2,2\}) > \tau_{2} = (1, \phi_{2}^{1}=\{1\})$ (we omit here deadline and period characteristics), Figure~\ref{fig:unpredictable.pdf} shows that if we consider the worst-case duration of each thread of $\tau_{1}$ the single thread of $\tau_{2}$ complete at time 2 while if we reduce the duration of the first phase of task $\tau_{1}$ by one time unit then $\tau_{2}$ completes at time 4. The example shows the non predictability of multi-phase multi-thread hierarchical schedulers. 

\begin{figure}[!h]
	\centering
   \includegraphics[width=4.5cm]{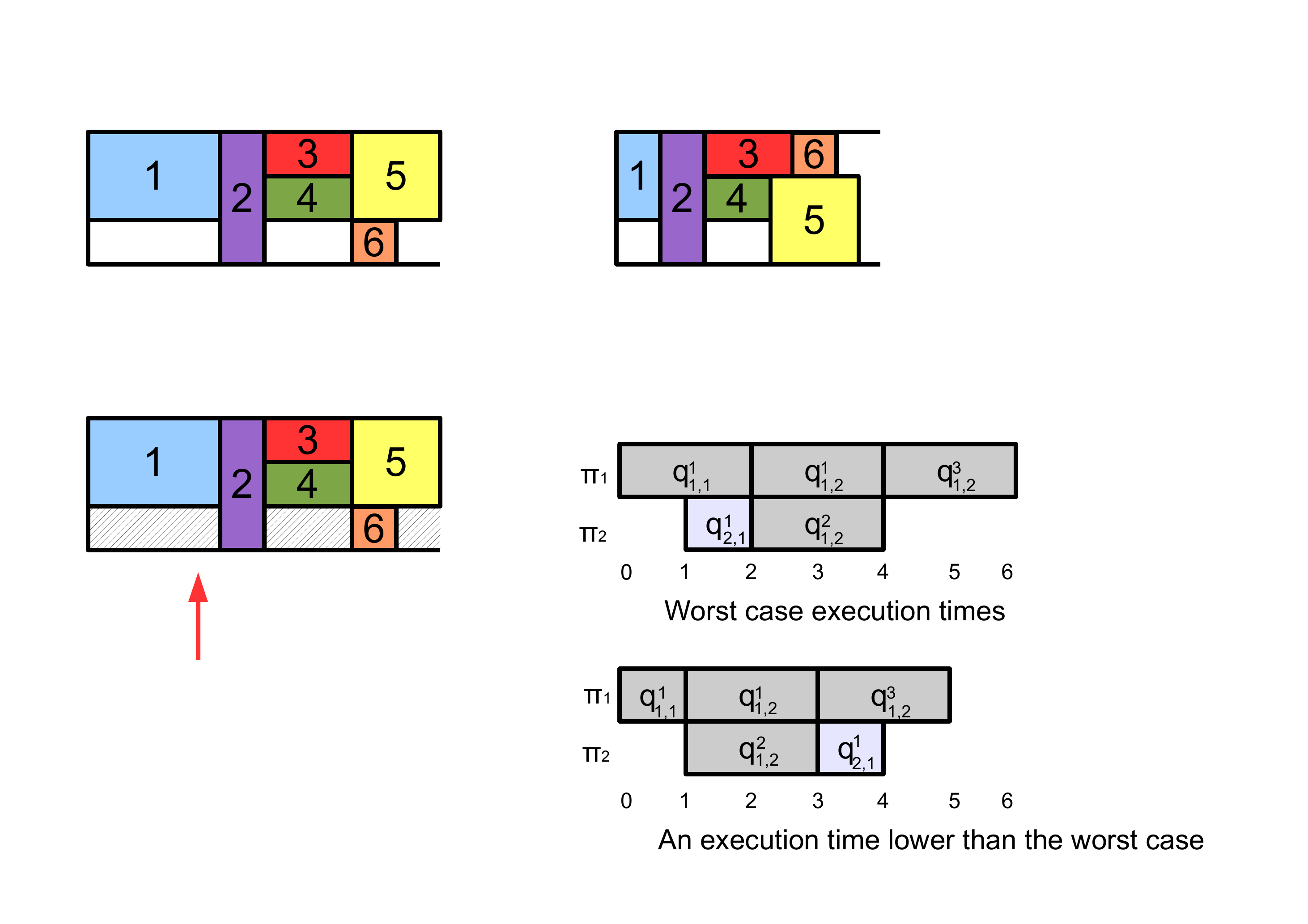}
   \caption{Multiphase : unpredictability}
   \label{fig:unpredictable.pdf}
\end{figure}
%%%%%%%%%%%%%%%%%%%%%%%%%%%%%%%%%%%%%%%%%%%%%%%%%%%
%\begin{figure}[h]
%   \caption{Example of caption.}
%\end{figure}

%\noindent Long captions should be set as in 
%\begin{figure}[h] 
%   \caption{Example of long caption. It is 
%     aligned on both sides and indented with an 
%     additional margin on both sides of 0.42-cm (1~pica).}
%\end{figure}

%------------------------------------------------------------------------- 
%\nocite{ex1,ex2}

\bibliographystyle{latex8}
\bibliography{biblio}

\end{document}